\documentclass{article}

\def\noheaderplainsetup{

\topmargin=0pt \headheight=0pt \headsep=0pt  \oddsidemargin=0pt \evensidemargin=0pt  \textheight=9.1truein \textwidth=6.5truein}   

\noheaderplainsetup

\usepackage{amsfonts}

\begin{document}


\newcommand{\cltw}{\mbox{LKg$^0$}}


\newcommand{\sqc}{\mbox{\small \raisebox{0.0cm}{$\bigtriangleup$}}} 

\newcommand{\sqd}{\mbox{\small \raisebox{0.049cm}{$\bigtriangledown$}}} 

\newcommand{\code}[1]{\ulcorner #1 \urcorner}
\newcommand{\pintimpl}{\mbox{\hspace{2pt}\raisebox{0.033cm}{\tiny $>$}\hspace{-0.18cm} \raisebox{-0.043cm}{\large --}\hspace{2pt}}} 
\newcommand{\st}{\mbox{\raisebox{-0.05cm}{$\circ$}\hspace{-0.13cm}\raisebox{0.16cm}{\tiny $\mid$}\hspace{2pt}}}  
\newcommand{\sti}{\mbox{\raisebox{-0.02cm}
{\scriptsize $\circ$}\hspace{-0.121cm}\raisebox{0.08cm}{\tiny $.$}\hspace{-0.079cm}\raisebox{0.10cm}
{\tiny $.$}\hspace{-0.079cm}\raisebox{0.12cm}{\tiny $.$}\hspace{-0.085cm}\raisebox{0.14cm}
{\tiny $.$}\hspace{-0.079cm}\raisebox{0.16cm}{\tiny $.$}\hspace{1pt}}}
\newcommand{\intimpl}{\mbox{\hspace{2pt}$\circ$\hspace{-0.14cm} \raisebox{-0.043cm}{\Large --}\hspace{2pt}}} 
\newcommand{\fintimpl}{\mbox{\hspace{2pt}$\bullet$\hspace{-0.14cm} \raisebox{-0.058cm}{\Large --}\hspace{-6pt}\raisebox{0.008cm}{\scriptsize $\wr$}\hspace{-1pt}\raisebox{0.008cm}{\scriptsize $\wr$}\hspace{4pt}}} 

\newcommand{\adi}{\hspace{2pt}\raisebox{0.02cm}{\mbox{\small $\sqsupset$}}\hspace{2pt}} 
\newcommand{\plus}{\mbox{\hspace{1pt}\raisebox{0.05cm}{\tiny\boldmath $+$}\hspace{1pt}}}
\newcommand{\minus}{\mbox{\hspace{1pt}\raisebox{0.05cm}{\tiny\boldmath $-$}\hspace{1pt}}}
\newcommand{\mult}{\mbox{\hspace{1pt}\raisebox{0.05cm}{\tiny\boldmath $\times$}\hspace{1pt}}}
\newcommand{\equals}{\mbox{\hspace{1pt}\raisebox{0.05cm}{\tiny\boldmath $=$}\hspace{1pt}}}
\newcommand{\notequals}{\mbox{\hspace{1pt}\raisebox{0.05cm}{\tiny\boldmath $\not=$}\hspace{1pt}}}
\newcommand{\successor}{\mbox{\hspace{1pt}\boldmath $'$}}

\newcommand{\elz}[1]{\mbox{$\parallel\hspace{-3pt} #1 \hspace{-3pt}\parallel$}} 
\newcommand{\elzi}[1]{\mbox{\scriptsize $\parallel\hspace{-3pt} #1 \hspace{-3pt}\parallel$}}
\newcommand{\emptyrun}{\langle\rangle} 
\newcommand{\oo}{\bot}            
\newcommand{\pp}{\top}            
\newcommand{\xx}{\wp}               
\newcommand{\legal}[2]{\mbox{\bf Lr}^{#1}_{#2}} 
\newcommand{\win}[2]{\mbox{\bf Wn}^{#1}_{#2}} 
\newcommand{\watom}[2]{#1\{#2\}} 
\newcommand{\seq}[1]{\langle #1 \rangle}           


\newcommand{\pst}{\mbox{\raisebox{-0.01cm}{\scriptsize $\wedge$}\hspace{-4pt}\raisebox{0.16cm}{\tiny $\mid$}\hspace{2pt}}}
\newcommand{\pcost}{\mbox{\raisebox{0.12cm}{\scriptsize $\vee$}\hspace{-4pt}\raisebox{0.02cm}{\tiny $\mid$}\hspace{2pt}}}

\newcommand{\gneg}{\mbox{\small $\neg$}}                  
\newcommand{\mli}{\hspace{2pt}\mbox{\small $\rightarrow$}\hspace{2pt}}                      
\newcommand{\cla}{\mbox{$\forall$}}      
\newcommand{\cle}{\mbox{$\exists$}}        
\newcommand{\mld}{\hspace{2pt}\mbox{\small $\vee$}\hspace{2pt}}     
\newcommand{\mlc}{\hspace{2pt}\mbox{\small $\wedge$}\hspace{2pt}}   
\newcommand{\mlci}{\hspace{2pt}\mbox{\footnotesize $\wedge$}\hspace{2pt}}   
\newcommand{\ade}{\mbox{\large $\sqcup$}}      
\newcommand{\ada}{\mbox{\large $\sqcap$}}      
\newcommand{\add}{\hspace{2pt}\mbox{\small $\sqcup$}\hspace{2pt}}                     
\newcommand{\adc}{\hspace{2pt}\mbox{\small $\sqcap$}\hspace{2pt}} 
\newcommand{\adci}{\hspace{2pt}\mbox{\footnotesize $\sqcap$}\hspace{2pt}}              
\newcommand{\clai}{\forall}     
\newcommand{\clei}{\exists}        
\newcommand{\tlg}{\bot}               
\newcommand{\twg}{\top}               
\newcommand{\col}[1]{\mbox{$#1$:}}


\newtheorem{theoremm}{Theorem}[section]
\newtheorem{factt}[theoremm]{Fact}
\newtheorem{corollaryy}[theoremm]{Corollary}
\newtheorem{definitionn}[theoremm]{Definition}
\newtheorem{thesiss}[theoremm]{Thesis}
\newtheorem{lemmaa}[theoremm]{Lemma}
\newtheorem{conventionn}[theoremm]{Convention}
\newtheorem{examplee}[theoremm]{Example}
\newtheorem{exercisee}[theoremm]{Exercise}
\newtheorem{remarkk}[theoremm]{Remark}
\newenvironment{definition}{\begin{definitionn} \em}{ \end{definitionn}}
\newenvironment{thesis}{\begin{thesiss} \em}{ \end{thesiss}}
\newenvironment{theorem}{\begin{theoremm}}{\end{theoremm}}
\newenvironment{lemma}{\begin{lemmaa}}{\end{lemmaa}}
\newenvironment{fact}{\begin{factt}}{\end{factt}}
\newenvironment{corollary}{\begin{corollaryy}}{\end{corollaryy}}
\newenvironment{convention}{\begin{conventionn} \em}{\end{conventionn}}
\newenvironment{example}{\begin{examplee} \em}{\end{examplee}}
\newenvironment{exercise}{\begin{exercisee} \em}{\end{exercisee}}
\newenvironment{remark}{\begin{remarkk} \em}{\end{remarkk}}
\newenvironment{proof}{ {\bf Proof.} }{\  \rule{2.5mm}{2.5mm} \vspace{.2in} }

\newcommand{\nso}{nongshim}
\newcommand{\ns}[3]{\mbox{\it nongshim(#1,#2,#3)}}

\title{A Heuristic Proof Procedure for Propositional Logic}
\author{Keehang Kwon\\ 
  Department of AI, DongA University, \\
   khkwon@dau.ac.kr}
\date{}
\maketitle

Theorem proving  is one of the oldest applications which require heuristics
to prune the search space.  Invertible  proof procedures has been the major
tool.
In this paper, we present a novel and powerful  heuristic called  $\nso$ which can be
seen as an underlying principle of invertible proof procedures.
 Using this  heuristic, we  derive an
 invertible sequent calculus\cite{Ketonen,Troe}
from sequent calculus for propositional logic.
\section{Introduction}\label{intr}

Theorem proving  is one of the oldest applications which require heuristics
to prune the search space.  One key observation about  proof search is that some rules
are {\it invertible}, that is, the premisses are derivable whenever the conclusion is
derivable. We can apply invertible rules  whenever possible without
losing completeness. 
The proof search strategy that first applies all
invertible rules is called {\it inversion}. 
An inversion calculus  has been developed 
for classical, intuitionistic and linear logic.  However, it has the following
drawbacks:

\begin{itemize}

\item Whenever  a new logic  comes along,
 its corresponding inversion calculus must be separately developed and proved to
 be correct. This is a time consuming and redundant process.
 
 \item Inversion calculus processes connectives eagerly, often processing them
  even when it is not
 necessary.
 
 \end{itemize}
 
 To overcome these, we propose a general heuristic called {\it \nso} which  can be universally
 applied to a wide class of logics, often leading to inversion calculus.

\section{A Universal Heuristic for Theorem Proving}

Japaridze\cite{Jap03,Japtcs,Japic} have used an interesting heuristic in developing his proof theory, which
 is closely related to the Nongshim Cup game.
The Nongshim Cup is a team competition for the two Go-playing
countries such as China, Korea and Japan. It has an interesting 
tournament format: 
Two countries each starts with some number  (typically five each) of players. 
The winner of each subsequent game stays in to play the next opponent from another country.
In the end the team with remaining players  is the winner of the
tournament.

Let us assume Korea with $m$ korean players $\{ k_1,\ldots,k_m \}$ 
competes  against China with  $ n$ chinese players $\{ c_1,\ldots,c_n \}$.
This is written as \ns{Go}{$\{ k_1,\ldots,k_m \}$}{$\{ c_1,\ldots,c_n \}$}.
The following algorithm will clarify the Cup format.

\begin{description}

\item [step1:]  set current game to \ns{Go}{$\{ k_1 \}$}{$\{ c_1 \}$} 
\% initialization
\item [step2:] if Korea wins the current game of the form \ns{Go}{$\{ k_1,\ldots,k_i \}$}{$\{ c_1,\ldots,c_j \}$}, then     update   current game to \ns{Go}{$\{ k_1,\ldots,k_i \}$}{$\{ c_1,\ldots,c_j,c_{j+1} \}$} \\

      else      update current game to \ns{Go}{$\{ k_1,\ldots,k_i,k_{j+1} \}$}{$\{ c_1,\ldots,c_j \}$}

\item [step3:] repeat step 2 until one team has no remaining player.
      
  \end{description}
\noindent 
In this way,  the decision of winning/losing can be achieved at the
{\it earliest} possible time.

We are interested in applying the above idea to theorem proving.
 One instance of this heuristic -- $\ns{literalization(F)}{M}{O}$ --
has been used
to proof in first-order logic, 
where $F$ is a first-order logic formula, $M$ is a set of top-level occurrences of $\cle$-formulas
and $O$ is a set of top-level occurrences of
$\cla$-formulas. The process of literalization(F) transforms $F$ to its skeleton by
removing $M$ and $O$ in it, as we shall see below.
The resulting proof procedure\cite{Kwonheu17} turns out to be quite effective and performs better than the traditional inversion calculus.

In this paper, we apply this heuristic approach to propositional logic.
That is, we apply 

 \[ \ns{literalization(F)}{M}{O} \]
 
\noindent where 

\begin{itemize}

\item $F$ is a propositional  formula, 

\item $M$ is a set of top-level occurrences of $\mld$-formulas
and 

\item $O$ is a set of top-level occurrences of $\mlc$-formulas.

\end{itemize}

That is, the resulting procedure, which we call \cltw,  is a $game$-$viewed$ proof which 
captures  $game$-$playing$ nature  in proof search. It views 

\begin{enumerate}

\item sequents as games between the machine and the environment,

\item proofs as a winning strategy of the machine, and 

\item $\mlc$  as the env's resource and $\mld$ as the machine's resource.

\end{enumerate}

 For propositional logic, it turns out that this heuristic
 does not derive a new calculus.
 Instead, it is able to derive an existing invertible sequent calculus from sequent calculus.
This is meaningful, as  it provides a deeper insight into why 
all the rules are invertible in propositional logic\cite{Ketonen}.

\section{The  logic $\cltw$}\label{ss6}

 The  formulas  are   the standard  classical propositional formulas,
 with the features that
 (a) $\twg,\tlg$ are added, and (b) $\gneg$ is only allowed to be applied to atomic formulas.
Thus we assume that formulas are in negation normal form.

The  deductive system $\cltw$ below axiomatizes the set of valid propositional formulas. 
$\cltw$ is a one-sided sequent calculus system, where 
a  sequent is a multiset of formulas.
Our presentation  follows the one in \cite{Jap03}.

First, we need to define some terminology.

\begin{enumerate}
\item A {\bf surface occurrence} of a subformula is an occurrence that is 
not in the scope of any connectives ($\mlc$ and/or $\mld$). 
\item  A sequent is {\bf literal} iff all of its formulas are so. 
\item The {\bf literalization} $\elz{F}$ of a formula $F$ is the result of replacing
  in $F$ every surface occurrence of
  $\mld$-subformulas by $\tlg$, and every surface occurrence of
  $\mlc$-subformulas by $\twg$.
  
   The {\bf literalization} $\elz{F_1,\ldots,F_n}$ of a sequent 
$F_1,\ldots,F_n$ is the propositional formula $\elz{F_1}\mld\ldots\mld \elz{F_n}.$
\item A sequent  is said to be {\bf stable} iff its literalization is classically valid; otherwise it is {\bf unstable}.

\end{enumerate}

\begin{center}
\begin{picture}(100,30)

\put(0,10){\bf THE RULES OF $\cltw$}

\end{picture}
\end{center}

$\cltw$ has the four rules listed below where
 $\Gamma$ is a multiset of formulas and $F$ is a formula.

The  deductive system $\cltw$ is shown below. 
Below, $X$:stable means that $X$ must be  stable.
   Similarly  for $X$:unstable. The Fail rule reads:
   an unstable sequent $X$ containing no surface  occurrences of  $H_0 \mld H_1$ is not derivable.

  \begin{center}
\begin{picture}(74,70)
\put(12,50){\bf  Fail}
\put(20,30){$\tlg$}
\put(0,22){\line(1,0){45}}
\put(55,20){(no surface $H_0 \mld H_1$ in $X$)}
\put(8,8){$X$:unstable}
\end{picture}
\end{center}

\begin{center}
\begin{picture}(74,70)

\put(12,50){\bf $\mld$-rule}
\put(8,30){$\Gamma,F,G$}
\put(0,22){\line(1,0){78}}
\put(8,8){$\Gamma, F\mld G$:unstable}

\end{picture}
\end{center}

\begin{center}
\begin{picture}(70,70)
\put(12,50){\bf Succ}
\put(8,30){$\twg$}
\put(0,22){\line(1,0){45}}
\put(55,20){(no surface $H_0 \mlc H_1$ in $X$)}
\put(8,8){$X$: stable}
\end{picture}
\end{center}

\begin{center}
\begin{picture}(74,70)
\put(20,50){\bf $\mlc$-rule }
\put(20,30){$\Gamma,F$ \hspace{1em} $\Gamma,G$}
\put(0,22){\line(1,0){85}}
\put(20,8){$\Gamma, F\mlc G$: stable}
\end{picture}
\end{center}

A {\bf $\cltw$-proof} of a sequent $X$ is a sequence $X_1,\ldots,X_n$ of sequents, with $X_n=X$,
$X_1 = \twg$ such that, each $X_i$ follows  by one of the rules of $\cltw$ from $X_{i-1}$.


Below we describe some examples.

\begin{example}\label{j28b}
The formula $p(a)\mlc p(b), \neg p(a) \mld \neg p(b)$ is provable in $\cltw$ as follows:\vspace{7pt} \\\\
\noindent 1. $\begin{array}{l}
   p(b), \neg p(a),  \neg p(b)
\end{array}$  \ \ $Succ$\vspace{7pt}

\noindent 2. $\begin{array}{l}
   p(a), \neg p(a), \neg p(b)
\end{array}$  \ \ $Succ$\vspace{7pt}

\noindent 3. $\begin{array}{l}
   p(b), \neg p(a) \mld \neg p(b)
\end{array}$  \ \ $\mld$ from 1 \vspace{7pt}

\noindent 4. $\begin{array}{l}
   p(a), \neg p(a) \mld \neg p(b)
\end{array}$  \ \ $\mld$ from 2 \vspace{7pt}

\noindent 5. $\begin{array}{l}
   p(a)\mlc p(b), \neg p(a) \mld \neg p(b)
\end{array}$  \ \ $\mlc$ from 3,4 \vspace{7pt}

\end{example}

\section{The soundness and completeness of \cltw}\label{ssc}

We now present the soundness and completeness of \cltw.

\begin{theorem}\label{main}
  
  \begin{enumerate}

  \item If  $\cltw$ terminates with success for $X$, then $X$ is valid.

  \item If  $\cltw$ terminates with failure  for $X$, then $X$ is invalid.


    \end{enumerate}
\end{theorem}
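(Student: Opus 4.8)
The two clauses together say that the proof search performed by $\cltw$ is sound and complete for classical validity: it always halts, and it halts in the success state on exactly the valid sequents. The whole argument rests on the structural feature that the nongshim reading presupposes --- \emph{every} rule of $\cltw$ is invertible, so decomposing a surface subformula neither creates nor destroys validity --- which I would make precise through three facts.

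First, the base case: for a \emph{literal} sequent $\Gamma$, $\Gamma$ is classically valid iff it is stable. Since every member of $\Gamma$ is a literal or a constant, $\elz{\Gamma}$ is just the clause built from the members of $\Gamma$, and a clause is a tautology exactly when it contains $\twg$ or a complementary pair of literals; in that case, and only then, the one-sided sequent $\Gamma$ is valid. This simultaneously justifies the $Succ$ rule and the failure verdict issued at a literal unstable sequent. Second, invertibility of the decomposition rules: $\Gamma,F\mld G$ is valid iff $\Gamma,F,G$ is valid (associativity of $\mld$), and $\Gamma,F\mlc G$ is valid iff both $\Gamma,F$ and $\Gamma,G$ are (distributivity of $\mld$ over $\mlc$); the remaining rule of $\cltw$ is invertible for the same elementary reasons. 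The forward directions are soundness of the rules; the backward directions are the invertibility itself. I would also record that the nongshim turn order --- at a stable sequent the search decomposes a surface $\mlc$-subformula if there is one and otherwise a surface $\mld$-subformula, at an unstable sequent a surface $\mld$-subformula --- always leaves some rule applicable while the sequent is non-literal, so no branch can halt before reaching a literal sequent. Third, termination: weighting a sequent by the number of occurrences of $\mlc$ and $\mld$ in it, each backward application of a decomposition rule strictly lowers this weight in every premise, so every branch is finite and $\cltw$ always halts.

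Granting these, clause 1 follows by induction over the finite successful derivation tree: its leaves are stable literal sequents, hence valid by the base case, and by the forward direction of invertibility validity passes from premises to conclusion at every $\mld$- and $\mlc$-step, so the endsequent $X$ is valid. Clause 2 is the mirror image: a failed run has, by termination, a root-to-leaf branch --- at each $\mlc$-node following the child that failed --- ending at a literal unstable sequent, which is invalid by the base case; the backward direction of invertibility then carries invalidity down this branch to $X$, directly across each $\mld$-step and across each $\mlc$-step because a single invalid conjunct-subgoal already forces its conclusion invalid. Hence $X$ is invalid.

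The delicate point, I expect, is clause 2: not the downward propagation of invalidity, which is a routine use of invertibility, but the bookkeeping that ties the informal ``terminates with failure'' to the existence of such a branch, together with the verification --- buried in the second fact --- that the heuristic's laziness never strands the search on a valid but not-yet-literal sequent. That is the only place where soundness of the failure verdict could realistically break, so it is where I would concentrate the effort.
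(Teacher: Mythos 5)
Your overall architecture --- termination, an invertibility lemma for the $\mld$- and $\mlc$-rules, and a base case at the leaves, assembled by induction over the derivation --- matches the paper's proof, which likewise argues soundness by cases on the last rule applied and completeness by following an unprovable premise down to a failing leaf. The invertibility facts you isolate are exactly what the paper uses in its soundness Cases 1--2 and completeness Cases 1 and 2.2.

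The gap is in your base case, and it is not bookkeeping but a mischaracterization of the calculus. You assert that the nongshim turn order ``always leaves some rule applicable while the sequent is non-literal, so no branch can halt before reaching a literal sequent,'' and accordingly you verify the success and failure verdicts only for \emph{literal} sequents. But the point of the heuristic --- the reason literalization is introduced at all --- is that $Succ$ fires as soon as the sequent is \emph{stable} and free of surface $\mlc$-occurrences, even while surface $\mld$-occurrences remain undecomposed: the machine has already won without spending its $\mld$-resources. For example, $p,\ \gneg p,\ q\mld r$ is a $Succ$ leaf of $\cltw$ but is not literal, so your base-case lemma (``a literal sequent is valid iff stable'') does not apply to it. What is needed, and what the paper supplies in its soundness Case 3 and completeness Case 2.1, is a monotonicity argument about \emph{reversing the literalization}: if $\elz{X}$ is valid and $X$ has no surface $\mlc$-subformula, then $X$ arises from $\elz{X}$ by replacing occurrences of $\tlg$ with $\mld$-formulas in a negation-normal (hence monotone) context, which preserves validity; dually, replacing $\twg$ with $\mlc$-formulas preserves invalidity when $\elz{X}$ is invalid. (On the failure side your base case happens to suffice, since under the paper's definitions any sequent with a surface $\mlc$-subformula literalizes to something containing $\twg$ and is therefore stable, so a failing leaf really is literal; the success side is where your argument breaks.) Your clause-based tautology criterion is the degenerate case of this monotonicity lemma in which no replacement occurs. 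Without the general version, what you have proved is the theorem for the fully eager inversion calculus, not for $\cltw$, and the early-termination cases are precisely what distinguish the two.
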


\begin{proof}  Consider an arbitrary sequent $X$. 
\vspace{5pt}

{\em Soundness:}  Induction on the length of derivatons.

{\em Case 1:} $X$ is derived from $Y$ and $Z$ by $\mlc$-rule.
By the induction hypothesis,
both $Y$ and $Z$ are valid, which implies that  $X$ is valid.

{\em Case 2:} $X$ is derived from $Y$ by $\mld$-rule.
By the induction hypothesis,
 $Y$ is valid, which implies that  $X$ is valid.

{\em Case 3}:  $X$ is derived from $Y$ by Succ.

In this case, we know that there is no surface occurrences of
$F\mlc G$ in $X$ and 
  $\elz{X}$ is classically valid. It is then
  easy to see that, reversing the literalization
  of $\elz{X}$ (replacing $\tlg$ by any formula of the form
  $F\mld G$)
  preserves  validity. For example, if $X$ is $p\mli p, q\mld r$,
  then $\elz{X}$ is valid and $X$ is valid as well.

 \vspace{5pt}

{\em Completeness:}  Assume $\cltw$ terminates with failure. 
 
We proceed by induction on the length of derivations.

If $X$ is stable, then there should be a $\cltw$-unprovable sequent $Y$ with the following
condition.

 {\em Case 1: $\mlc$:} $X$ has the form $\Gamma,F\mlc G$, and
 $Y$ is either $\Gamma,F$ or $\Gamma,G$.
Suppose  $\Gamma,F$ is  $\cltw$-unprovable.
By the induction hypothesis, $\Gamma,F$ is invalid and $X$ is invalid
as well. Similarly for $\Gamma,G$.

Next, we consider the cases when $X$ is not stable. Then there are two
cases to consider.

{\em Case 2.1: Fail}: In this case,
there is no surface occurrence of $F\mld G$ and the algorithm terminates with failure.
As $X$ is not stable, $\elz{X}$ is not classically valid. If we reverse the propositionalization
of $\elz{X}$ by replacing $\twg$ by any formula with some surface occurrence of $F\mlc G$,
we observe that invalidity is preserved. Therefore, $X$ is not valid.

{\em Case 2.2: $\mld$}: In this case,
$X$ has the form $\Gamma,F\mld G$ and $Y$ is $\Gamma,F,G$.
In this case, $Y$ is a $\cltw$-unprovable sequent.
By the induction hypothesis,  $Y$ is  invalid. Therefore $X$ is not valid.

\end{proof}

\section{A simplified  $\cltw$}\label{sec:alg}

$\cltw$ in the previous section can be simplified by
observing the following: \\

A sequent $X$ is stable iff $X$ is either
$\Gamma,\top$ or $\Gamma,p,\gneg p$
or $\Gamma,F\mlc G$. \\

This observation leads to a simplified version of
$\cltw$, as shown below: \\

 \noindent{\bf Procedure} $pv(X)$:

 \begin{description}

\item[if  $X$ is  $\Gamma,p,\gneg p$] {\bf then}   return   Yes.

\item[elsif $X$ is $\Gamma,F\mlc G$] {\bf then}   return $pv(\Gamma,F)$ and $pv(\Gamma,G)$.

\item[elsif $X$ is $\Gamma,F\mld G$] {\bf then}   return $pv(\Gamma,F,G)$. 
  
\item[otherwise]  return No.
 
\end{description}\vspace{8pt}

Of course, we can speed up the above procedure by processing $\mld,\mlc$ in parallel. \\

 \noindent{\bf Procedure} $pv(X)$:

 \begin{description}

\item[if  $X$ is  $\Gamma,p,\gneg p$] {\bf then}   return   Yes.

\item[elsif $X$ is $\Gamma,F_1\mlc G_1,\ldots,F_n\mlc G_n$] {\bf then}   return \\ $pv(\Gamma,F_1,\ldots,F_n)$ and $\ldots$ and
  $pv(\Gamma,G_1,\ldots,G_n)$.\\
  \% total $2^n$ combinations above.

\item[elsif $X$ is $\Gamma,F_1\mld G_1,\ldots,F_n\mld G_n$] {\bf then}   return \\ $pv(\Gamma,F_1,G_1,\ldots,F_n,G_n)$. 
  
\item[otherwise]  return No.
 
\end{description}\vspace{8pt}

\end{document}